\documentclass{IEEEtran}
\usepackage{amssymb,amsfonts,amsmath,amsthm}
\usepackage{color}

\usepackage{algorithm}
\usepackage[noend]{algpseudocode}
\usepackage{color}
\usepackage{pgfplots}
\usepackage{filecontents}
\usepackage[export]{adjustbox}
\usepackage{array}

\newtheorem{mydef}{Definition}

\newtheorem{myprop}{Proposition}


\newenvironment{example}
{\let\oldqedsymbol=\qedsymbol
	\renewcommand{\qedsymbol}{$\centerdot$}
	\begin{proof}[\bfseries\upshape Example ]}
	{\end{proof}
	\renewcommand{\qedsymbol}{\oldqedsymbol}}

\begin{document}
\title{\Large \bf Fast Xor-based Erasure Coding based on Polynomial Ring Transforms}
\author{
{\rm Jonathan Detchart,     J\'er\^ome Lacan}\\
ISAE-Supa\'ero, Universit\'e de Toulouse, France
} 

\maketitle

\begin{abstract}
The complexity of software implementations of MDS erasure codes mainly depends on the efficiency of the finite field operations implementation. 
In this paper, we propose a method to reduce the complexity of the finite field multiplication by using simple transforms between a field and a ring to perform the multiplication in a ring. 
We show that moving to a ring reduces the complexity of the operations. Then, we show that this construction allows the use of simple scheduling to reduce the number of operations.
\end{abstract}

\section{Introduction}
\label{sec:intro}
Most of practical Maximum Distance Separable (MDS) packet erasure codes are implemented in software. In the various applications like packet erasure channels \cite{xor-luby} or distributed storage systems \cite{plankScheduling:2011}, the coding/decoding process performs operations over finite fields. The efficiency of the implementation of these finite field operations is thus critical for these applications.  


To speedup this operation, \cite{xor-luby} described an implementation of finite field multiplications which only uses simple \texttt{xor} operations, contrarily to classic software multiplications which are based on lookup tables (LUT). The complexity of multiplying by an element, \emph{i.e.} the number of \texttt{xor} operations, depends on the size of the finite field and also on the element itself. This kind of complexity is studied for Maximum-Distance Separable (MDS) codes in \cite{BlaumR99}. Other work has been done to reduce redundant \texttt{xor} operations by applying scheduling \cite{luoScheduling:2014}.

Independently, in the context of large finite field for cryptographic applications, \cite{ITOH1989} proposed a \texttt{xor}-based method to perform fast hardware implementations of multiplications 
by transforming each element of a field into an element of a larger ring. In this polynomial ring, where the operations on polynomials are done modulo  $x^n+1$,  the multiplication by a monomial is much simpler as the modulo is just a cyclic shift. The authors identified two classes of fields based on irreducible polynomials with binary coefficients 
allowing to transform each field element into a ring element by adding additional "ghost bits".

In this paper, we extend their approach to define fast software implementations of \texttt{xor}-based erasure codes. We propose an original method called PYRIT (PolYnomial RIng Transform) to perform operations between elements of a finite field into a bigger ring by using fast transforms between these two structures. Working in such a ring is much easier than working in a finite field. Firstly, it reduces the coding complexity by design. And secondly, it allows the use of simple scheduling to reduce the number of operations thanks to the properties of the ring structure.

The next section presents the algebraic framework allowing to define the various transforms between the finite field and some subsets of the ring. Then we discuss about the choice of these transforms and their properties. We also detail the complexity analysis before introducing some scheduling results. 

\section{Algebraic context} \label{sec:context}
\label{sec:context}
The algebraic context of this paper is finite fields and ring theory. More detailed presentation of this context including the proofs of the following propositions can be found in \cite{MWSl77} or \cite{poli1992error}.
\begin{mydef}
\label{def:finiteField}
Let  $\mathbb{F}_{q^w}$ be the finite field with $q^w$ elements.  
\end{mydef}
\begin{mydef}
\label{def:ring}
Let $R_{q,n}=\mathbb{F}_{q}[x]/(x^n-1)$ denote the quotient ring of polynomials of the polynomial ring $\mathbb{F}_{2}[x]$ quotiented by the ideal generated by the polynomial $x^n-1$.   
\end{mydef}
\begin{mydef}
\label{def:factorization}
Let $p_1^{u_1}(x)p_2^{u_2}(x)\ldots p_r^{u_r}(x)=x^n-1$ be the decomposition of $x^n-1$ into irreducible polynomials over $\mathbb{F}_{q}$.  
\end{mydef}
When $n$ and $q$ are relatively prime, it can be shown that $u_1=u_2=\ldots=u_n=1$ (see \cite{poli1992error}). In other words, if $q=2$, and $n$ is odd, we simply have $p_1(x)p_2(x)\ldots p_r(x)=x^n-1$.

In the rest of this document, we assume that $n$ and $q$ are relatively prime.
\begin{myprop}
\label{prop:directSumDecomposition}
The ring $R_{q,n}$ is equal to the direct sum of its $r$ minimal ideals of  $A_i=((x^n-1)/p_i(x))$ for $i=1,\ldots,r$. 
\end{myprop}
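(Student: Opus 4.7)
The plan is to invoke the Chinese Remainder Theorem for the coprime factors of $x^n-1$. Since $n$ and $q$ are relatively prime, the factorization $x^n - 1 = p_1(x)p_2(x)\cdots p_r(x)$ is squarefree and the $p_i$ are pairwise coprime. CRT then delivers a ring isomorphism
$$ \varphi : R_{q,n} \;\longrightarrow\; \bigoplus_{i=1}^{r} \mathbb{F}_q[x]/(p_i(x)), \qquad f(x) \mapsto \bigl(f(x) \bmod p_i(x)\bigr)_{i=1}^{r}. $$
Each target factor $\mathbb{F}_q[x]/(p_i(x))$ is a field, because $p_i$ is irreducible, and hence has no proper nonzero ideals.

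The next step is to identify $A_i = \bigl((x^n-1)/p_i(x)\bigr)$ with the $i$-th summand under $\varphi$. Writing $\hat p_i(x) := (x^n-1)/p_i(x)$, two elementary divisibility facts do all the work: $p_j(x) \mid \hat p_i(x)$ for every $j \neq i$, while $\gcd(p_i, \hat p_i)=1$. Consequently $\varphi(\hat p_i)$ is zero in every coordinate except the $i$-th, where it is a nonzero, hence invertible, element of the field $\mathbb{F}_q[x]/(p_i(x))$. Because this coordinate is a field, the principal ideal generated by $\varphi(\hat p_i)$ exhausts the whole $i$-th summand, so
$$ \varphi(A_i) \;=\; 0 \oplus \cdots \oplus \mathbb{F}_q[x]/(p_i(x)) \oplus \cdots \oplus 0. $$

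The two conclusions of the proposition now fall out simultaneously. Minimality: $A_i$ corresponds under the ring isomorphism $\varphi$ to a field, which has no proper nonzero ideal, so $A_i$ is a minimal ideal of $R_{q,n}$. Direct sum decomposition: the codomain of $\varphi$ is, by construction, the (internal) direct sum of its $r$ coordinate summands, and pulling back by the isomorphism gives $R_{q,n} = A_1 \oplus A_2 \oplus \cdots \oplus A_r$.

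The only genuinely technical step — though not really hard — is checking that the ideal $(\hat p_i)$ in $R_{q,n}$ is exactly the preimage of the $i$-th coordinate, and not just contained in it. For the nontrivial inclusion one produces the primitive idempotent $e_i \in R_{q,n}$ corresponding to the $i$-th coordinate via a Bézout identity $a(x)p_i(x)+b(x)\hat p_i(x)=1$ in $\mathbb{F}_q[x]$, which shows that $e_i = b(x)\hat p_i(x) \pmod{x^n-1}$ lies in $A_i$ and acts as the identity on that component. This makes the CRT pullback concrete and closes the argument.
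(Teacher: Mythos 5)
Your argument is correct. For comparison: the paper does not actually prove this proposition — it is quoted as standard background, with the proofs explicitly deferred to the cited references (MacWilliams–Sloane, Chap.~8, and Poli–Huguet), where the decomposition is classically obtained via the primitive idempotents $\theta_i$. Your CRT route is a clean, self-contained substitute: squarefreeness of $x^n-1$ (which follows from $\gcd(n,q)=1$, as the paper itself notes) gives pairwise coprime irreducible factors, the Chinese Remainder Theorem gives the product decomposition, and the computation of $\varphi(\hat p_i)$ — zero in every coordinate $j\neq i$ and a unit in the $i$-th — identifies $A_i$ with the $i$-th summand, delivering minimality and the direct-sum statement simultaneously. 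One small remark: the step you flag at the end as ``genuinely technical'' is already settled by that observation. Since $\varphi$ is a ring isomorphism it carries the ideal generated by $\hat p_i$ to the ideal generated by $\varphi(\hat p_i)$ in the product, and an element that is a unit in one coordinate and zero elsewhere generates exactly $0\oplus\cdots\oplus\mathbb{F}_q[x]/(p_i(x))\oplus\cdots\oplus 0$; no separate argument for the reverse inclusion is needed. The Bézout construction of the idempotent $e_i$ is a nice bonus — it is precisely the $\theta_i$ the paper invokes afterwards in Proposition~2 and matches the idempotent-based treatment in the references — but it is not required to close this proof.
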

Moreover, each minimal ideal contains a unique primitive idempotent $\theta_i(x)$. A construction of this idempotent is given in \cite{MWSl77}, Chap. 8, Theorem 6.\\
Since $\mathbb{F}_{q}[x]/(p_i(x))$ is isomorphic to the finite field $ B_i=\mathbb{F}_{q^w_i}$, where $p_i(x)$ is of degree $w_i$, we have:
\begin{myprop}
\label{prop:directProductDecomposition}
$R_{q,n}$ is isomorphic to the following Cartesian product:
$$ R_{q,n} \simeq  B_1 \otimes B_2 \otimes \ldots B_r $$
%
For each $i=1,\ldots,r$, $A_i$ is isomorphic to $B_i$. The isomorphism is:
\begin{equation}
\phi_i  :  \begin{array}{ccc}
B_i & \rightarrow & A_i \\
b(x) & \rightarrow & b(x)\theta_i(x) \
\end{array}
\end{equation}  	

and the inverse isomorphism is:
\begin{equation}
\phi_i^{-1}  :  \begin{array}{ccc}
A_i & \rightarrow & B_i \\
a(x) & \rightarrow & a(\alpha_i) \\
\end{array}
\end{equation}  	
where $\alpha_i$ is a root of $p_i(x)$.
\end{myprop}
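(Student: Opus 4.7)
The plan is to establish the two parts of the proposition in sequence: first the global isomorphism $R_{q,n}\simeq B_1\otimes\cdots\otimes B_r$, then the individual isomorphism $A_i\simeq B_i$ through $\phi_i$ and $\phi_i^{-1}$. Since Proposition~\ref{prop:directSumDecomposition} already gives $R_{q,n}=A_1\oplus\cdots\oplus A_r$ as a direct sum of ideals, the first part essentially reduces to identifying each $A_i$ with the corresponding residue field. Alternatively, the global isomorphism can be obtained in one stroke by invoking the Chinese Remainder Theorem applied to the coprime factorization $x^n-1=p_1(x)\cdots p_r(x)$, which yields $\mathbb{F}_q[x]/(x^n-1)\simeq\prod_i\mathbb{F}_q[x]/(p_i(x))\simeq\prod_i B_i$. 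I would combine both viewpoints, using CRT to justify the product decomposition and then showing $\phi_i$ realizes the projection onto the $i$-th component (up to the identification $A_i\hookrightarrow R_{q,n}$).

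To prove that $\phi_i$ is an isomorphism from $B_i$ to $A_i$, I would rely on three standard properties of the primitive idempotents $\theta_i(x)$: $\theta_i^2\equiv\theta_i$ in $R_{q,n}$, $\theta_i(x)\equiv 1\pmod{p_i(x)}$, and $\theta_i(x)\equiv 0\pmod{p_j(x)}$ for $j\neq i$. First I would check that $\phi_i$ is well defined: if $b(x)\equiv b'(x)\pmod{p_i(x)}$, then $b-b'=p_i c$ for some $c$, and since $\theta_i\in A_i=((x^n-1)/p_i(x))$ we can write $\theta_i=\bigl((x^n-1)/p_i(x)\bigr)e(x)$, so $(b-b')\theta_i$ is a multiple of $x^n-1$, hence zero in $R_{q,n}$. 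Multiplicativity of $\phi_i$ follows from $\theta_i^2=\theta_i$, and additivity is immediate. Surjectivity is obtained by noting that any $a(x)\in A_i$ satisfies $a(x)\theta_i(x)=a(x)$ (since $\theta_i$ acts as identity on $A_i$), so $a=\phi_i(\bar a)$ where $\bar a$ is the class of $a$ modulo $p_i$. Injectivity then follows because $B_i$ is a field and $\phi_i(1)=\theta_i\neq 0$.

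For the inverse, I would use that evaluation at a root $\alpha_i$ of $p_i(x)$ gives the canonical ring isomorphism $\mathbb{F}_q[x]/(p_i(x))\to B_i$, and verify that the two compositions are the identity. For $a(x)\in A_i$, $\phi_i(\phi_i^{-1}(a))=a(\alpha_i)\theta_i(x)$; here one must show that $a(\alpha_i)\theta_i(x)=a(x)$ in $R_{q,n}$. This uses the fact that $a\equiv a(\alpha_i)\pmod{p_i(x)}$ together with $\theta_i\equiv 1\pmod{p_i}$ and $\theta_i\equiv 0\pmod{p_j}$ for $j\neq i$: by CRT, $a(\alpha_i)\theta_i$ and $a$ agree modulo every $p_j$, hence modulo $x^n-1$. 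Conversely, $\phi_i^{-1}(\phi_i(b))=b(\alpha_i)\theta_i(\alpha_i)=b(\alpha_i)$ since $\theta_i(\alpha_i)=1$.

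The main obstacle is not any single calculation but rather the reliance on the three idempotent identities recalled above; the proposition is essentially a repackaging of them. Consequently I expect the proof to be largely a citation of Theorem~6, Chap.~8 of \cite{MWSl77} for the existence and properties of $\theta_i$, followed by the short verifications sketched here. The only subtle point is the well-definedness argument for $\phi_i$, which requires writing $\theta_i$ explicitly as a multiple of $(x^n-1)/p_i(x)$ rather than just as an element of the ideal it generates.
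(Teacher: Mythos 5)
Your proof is correct: the CRT decomposition for the coprime factorization $x^n-1=p_1(x)\cdots p_r(x)$ (valid since $\gcd(n,q)=1$ makes the irreducible factors distinct), together with the idempotent identities $\theta_i^2=\theta_i$, $\theta_i\equiv 1\pmod{p_i}$, $\theta_i\equiv 0\pmod{p_j}$ for $j\neq i$, gives exactly the well-definedness, bijectivity, and inverse-via-evaluation-at-$\alpha_i$ verifications you sketch. Note that the paper itself offers no proof of this proposition---it defers entirely to the cited references (MacWilliams--Sloane, Ch.~8, and Poli--Huguet)---and your argument is precisely the standard one found there, so there is nothing to reconcile.
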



Let us  now assume that $q=2$. Let us introduce a special class of polynomials:
\begin{mydef}
	\label{def:AOP}
	The All One Polynomial (AOP) of degree $w$ is defined as 
	$$p(x) = x^w+x^{w-1}+x^{w-2}+\ldots+x+1$$
\end{mydef}

The AOP of degree $w$ is irreducible over $\mathbb{F}_{2}$  if and only if $w+1$ is a prime and $w$ generates $\mathbb{F}^*_{w+1}$, where $\mathbb{F}^*_{w+1}$ is the multiplicative group in $\mathbb{F}_{w+1}$ \cite{Wah1984}. The values $w+1$, such that the AOP of degree $w$ is irreducible is the sequence A001122 in \cite{A001122}. The first values of this sequence are: $ 3, 5, 11, 13, 19, 29, \ldots$. In this paper, we only consider irreducible AOP.

According to Proposition \ref{prop:directProductDecomposition}, $R_{2,w+1}$ is equal to the direct sum of its principal ideals $A_1=((x^{w+1}+1)/p(x))=(x+1)$ and $A_2=((x^{w+1}+1)/x+1)=(p(x))$ and $R_{2,w+1}$ is isomorphic to the direct product of $B_1=\mathbb{F}_{2}[x]/(p(x))=\mathbb{F}_{2^w}$ and $B_1=\mathbb{F}_{2}[x]/(x+1)=\mathbb{F}_{2}$.

It can be shown that the primitive idempotent of $A_1$ is $\theta_1=p(x)+1$. This idempotents is used to build the isomorphism $\phi_1$ between $A_1$ and $B_1$.

\section{Transforms between the field and the ring}
\label{sec:transforms}
This section presents different transforms between the field  $B_1=\mathbb{F}_{2^w}=\mathbb{F}_{2}[x]/(p(x))$ and the ring $R_{2,w+1}=\mathbb{F}_{2}[x]/(x^{w+1}+1)$.

\subsection{Isomorphism transform} 
The first transform is simply the application of the basic isomorphism between $B_1$ and the ideal $A_1$ of $R_{2,w+1}$ (see Prop. \ref{prop:directProductDecomposition}). 

By definition of the isomorphism, we have:
	$$ \phi_1^{-1}( \phi_1(u(x)).\phi_1(v(x)) ) = u(x).v(x) $$   
So, $\phi_1$ can be used to send the elements of the field in the ring, then, to perform the multiplication, and then, to come back in the field. We show in the following Proposition that the isomorphism admits a simplified version.

Let $W(b(x))$, the weight of $b(x)$, defined as the number of monomials in the polynomial representation of $b(x)$.  
\begin{myprop}
	\label{prop:isomorphismForm}
	\begin{equation*}
	\phi_1 (b_B(x)) = b_A(x) = \left \{ \begin{array}{cl}
	b_B(x) & \textrm{if }  W(b_B(x)) \textrm{ is even}  \\
	b_B(x)+p(x) & \textrm{else}
	\end{array}\right.
	\end{equation*}
	
	\begin{equation*}
	\phi_i^{-1}(b_A(x)) = b_B(x) =\left \{ \begin{array}{cl}
	b_A(x) & \textrm{ if }  b_{w}=0 \\
	b_A(x)+p(x) & \textrm{else}
	\end{array}\right.
	\end{equation*}   
	where $b_w$ is the coefficient of the monomial of degree $w$ of $b_A(x)$.
	
\end{myprop}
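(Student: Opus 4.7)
The plan is to apply Proposition~\ref{prop:directProductDecomposition} using the explicit form $\theta_1(x)=1+p(x)$, and then simplify the polynomial arithmetic separately modulo $x^{w+1}+1$ (for $\phi_1$) and modulo $p(x)$ (for $\phi_1^{-1}$).

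For $\phi_1$, I would write
$$\phi_1(b_B(x)) = b_B(x)\,\theta_1(x) = b_B(x) + b_B(x)\,p(x) \pmod{x^{w+1}+1}.$$
The key identity is that $p(x)$ is invariant under multiplication by $x$ in $R_{2,w+1}$: indeed $x\,p(x)=x^{w+1}+x^w+\cdots+x \equiv p(x)\pmod{x^{w+1}+1}$, since $x\,p(x)-p(x)=x^{w+1}-1\equiv 0$. By induction, $x^i p(x)\equiv p(x)$ for every $i\ge 0$. Writing $b_B(x)=\sum_i b_i x^i$ with $\deg b_B<w$ (because $b_B\in\mathbb{F}_2[x]/(p(x))$), this gives $b_B(x)\,p(x)\equiv\bigl(\sum_i b_i\bigr)p(x)\pmod{x^{w+1}+1}$. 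Over $\mathbb{F}_2$, the scalar $\sum_i b_i$ equals $0$ or $1$ according to the parity of $W(b_B(x))$, which yields the two cases claimed for $\phi_1$.

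For $\phi_1^{-1}$, Proposition~\ref{prop:directProductDecomposition} gives $\phi_1^{-1}(b_A(x))=b_A(\alpha_1)$ with $\alpha_1$ a root of $p(x)$, so in the canonical representation of $B_1=\mathbb{F}_2[x]/(p(x))$ this amounts to reducing $b_A(x)$ modulo $p(x)$. Since $b_A(x)\in R_{2,w+1}$ has degree at most $w$, I would write $b_A(x)=b_w x^w+b'(x)$ with $\deg b'<w$. If $b_w=0$, the polynomial already has degree less than $w$ and equals its own reduction; if $b_w=1$, then because $p(x)$ is monic of degree $w$ and the characteristic is $2$, adding $p(x)$ cancels the $x^w$ term and leaves a polynomial of degree less than $w$, which is the required canonical representative. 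The only real observation underlying both halves of the proof is the cyclic invariance $x\,p(x)\equiv p(x)\pmod{x^{w+1}+1}$; once that is in hand, each direction collapses to a single modular reduction and the stated formulas follow.
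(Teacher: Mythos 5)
Your proof is correct and follows essentially the paper's own route: both expand $\phi_1(b_B(x))=b_B(x)\theta_1(x)=b_B(x)+b_B(x)p(x)$ and reduce the first case analysis to the parity of $W(b_B(x))$, with you additionally supplying the justification $x\,p(x)\equiv p(x)\pmod{x^{w+1}+1}$, hence $b_B(x)p(x)\equiv W(b_B(x))\,p(x)$, which the paper merely asserts as an observation. For the inverse map the difference is only cosmetic: you verify directly that evaluation at $\alpha$ (i.e.\ reduction modulo $p(x)$, using $b_w$ to decide whether adding $p(x)$ is needed to reach the canonical representative of degree less than $w$) gives the stated formula, whereas the paper argues that a nonzero coefficient $b_w$ forces the element to have been produced by the second rule of $\phi_1$ and inverts that case analysis.
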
 
\begin{IEEEproof}
	For the first point, we have $\phi_1 (b(x))=b(x)\theta_1(x)=b(x)(p(x)+1)=b(x)p(x)+b(x)$. We can observe that $b(x)p(x)=0$ when $ W(b(x))$ is even and $b(x)p(x)=p(x)$ when $ W(b(x))$ is odd. The first point is thus obvious. 
	
	For the second point, it can be observed that, from the first point of this proposition, if an element of $A_1$ has a coefficient $b_{w}\not=0$, then it was necessarily obtained from the second rule, \textit{i.e.} by adding $p(x)$. Then, its image into $B_1$ can be obtained by subtracting (adding in binary) $p(x)$. If $b_w=0$, then nothing has to be done to obtain $b_B(x)$.
\end{IEEEproof}
\subsection{Embedding transform}
\label{sec:embedding} 
Let us denote by $\phi_E$ the embedding function which simply consists in considering the element of the field as an element of the ring without any transformation. This function was initially proposed in \cite{ITOH1989}.

Note that the images of the elements of $B_1$ doesn't necessarily belong to $A_1$. However, let us define the function $\bar{\phi}_1^{-1}$ from $R_{2,w+1}$ to $A_1$ by $\bar{\phi}_1^{-1}(b_A(x))=b_A(\alpha)$, where $\alpha$ is a root of $p(x)$. This function can be seen as an extension of the function $\phi_1^{-1}$ to the whole ring.  	

\begin{myprop}{\cite{ITOH1989}}
	\label{prop:embedding}
	For any $u(x)$ and $v(x)$ in $B_1$, we have: 
	$$ \bar{\phi}_1^{-1}( \phi_E(u(x)).\phi_E(v(x)) ) = u(x).v(x) $$   
\end{myprop}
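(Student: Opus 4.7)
The plan is to exhibit $\bar{\phi}_1^{-1}$ as a ring homomorphism from $R_{2,w+1}$ to $B_1$, so that the desired identity becomes immediate from the fact that $\phi_E$ preserves polynomial representatives. The algebraic content is captured by the factorization $x^{w+1}+1 = (x+1)p(x)$ over $\mathbb{F}_2$, which is precisely the direct-sum decomposition used in Prop.~\ref{prop:directSumDecomposition} applied to $n=w+1$.

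First, I would verify that $\bar{\phi}_1^{-1}$ is well defined on the quotient $R_{2,w+1}$. Since $p(\alpha)=0$ and $x^{w+1}+1 = (x+1)p(x)$, we have $\alpha^{w+1}+1 = (\alpha+1)\cdot 0 = 0$, so evaluation at $\alpha$ sends every multiple of $x^{w+1}+1$ to $0$ in $B_1$. Hence $\bar{\phi}_1^{-1} : R_{2,w+1}\to B_1$, $c(x)\mapsto c(\alpha)$, is well defined. Because evaluation at a point is always a ring homomorphism on polynomial rings and we have just checked it vanishes on the ideal defining $R_{2,w+1}$, $\bar{\phi}_1^{-1}$ is a ring homomorphism.

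Next, I would unwind the left-hand side using this homomorphism property:
\begin{equation*}
\bar{\phi}_1^{-1}\!\bigl(\phi_E(u(x))\cdot \phi_E(v(x))\bigr) = \bar{\phi}_1^{-1}(\phi_E(u(x)))\cdot \bar{\phi}_1^{-1}(\phi_E(v(x))),
\end{equation*}
where the product on the right is taken in $B_1$. Since any $u(x)\in B_1$ is represented by a polynomial of degree at most $w-1$, its embedding $\phi_E(u(x))$ is the same polynomial viewed in $R_{2,w+1}$; therefore $\bar{\phi}_1^{-1}(\phi_E(u(x))) = u(\alpha)$, which is exactly $u(x)$ taken modulo $p(x)$, i.e.\ $u(x)$ itself in $B_1$. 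The same holds for $v$, and multiplying gives $u(x)\cdot v(x)$, as desired.

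I do not expect any real obstacle: the only subtlety is checking that $\bar{\phi}_1^{-1}$ descends to the quotient $R_{2,w+1}$, which reduces to the factorization $x^{w+1}+1=(x+1)p(x)$; once that is in hand the proposition is simply the homomorphism property applied to embedded field elements.
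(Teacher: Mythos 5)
Your proof is correct, but it takes a different route from the paper. The paper argues through the idempotent decomposition of the ring: it writes $1=\sum_i\theta_i(x)$, so $\phi_E(u(x))=u(x)\sum_i\theta_i(x)$, uses the orthogonality $\theta_i(x)\theta_j(x)=0$ for $i\neq j$ and $\theta_i(x)^2=\theta_i(x)$ to get $\phi_E(u(x))\phi_E(v(x))=u(x)v(x)\sum_i\theta_i(x)$, and then observes that reducing modulo $p(x)$ kills every $\theta_i$ except $\theta_1$, which reduces to $1$. You instead bypass the idempotents entirely: you note that evaluation at $\alpha$ descends to a ring homomorphism on $R_{2,w+1}$ precisely because $p(x)$ divides $x^{w+1}+1$, and that $\phi_E$ preserves the polynomial representative, so the identity is just the homomorphism property applied twice. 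Your argument is more elementary and arguably cleaner --- it needs nothing beyond the factorization $x^{w+1}+1=(x+1)p(x)$, and the same reasoning immediately covers the Sparse transform of Prop.~\ref{prop:sparse} (where the paper indeed falls back on the ``remainder modulo $p(x)$'' view). What the paper's idempotent computation buys is structural insight consistent with Prop.~\ref{prop:directProductDecomposition}: it makes explicit that $\phi_E(u(x))$ is the element of the ring whose components in every $B_i$ are $u(x)\bmod p_i(x)$, which explains why the embedded product carries the correct $B_1$-component and ties the proposition to the direct-sum framework used throughout Section~\ref{sec:context}. One small point of care in your write-up: $u(\alpha)$ equals $u(x)$ in $B_1$ under the identification of $\alpha$ with the class of $x$ in $\mathbb{F}_2[x]/(p(x))$, which you use implicitly; stating that identification explicitly would close the only loose end.
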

\begin{proof}
	The embedding function corresponds to a multiplication by $1$ in the ring. In fact, $1$ is equal to the sum of the idempotents $\theta_i(x)$ of the ideals $A_i$, for $i=1,\ldots,l$ \cite[chapter 8, thm. 7]{MWSl77}. Thus,  $\phi_E(u(x))=u(x).\sum_{i=0}^{l}\theta_i(x)$. Then, $\phi_E(u(x)).\phi_E(v(x))$ is equal to $u(x).v(x).(\sum_{i=0}^{l}\theta_i(x))^2$. Thanks to the properties of idempotents, $\theta_i(x).\theta_j(x)$ is equal to $\theta_i(x)$ if $i=j$ and $0$ else. Thus, $\phi_E(u(x)).\phi_E(v(x))$ is equal to $u(x).v(x).(\sum_{i=0}^{l}\theta_i(x))$. The function $\bar{\phi}_i^{-1}$ is the computation of the remainder modulo $p_i(x)$. The irreducible polynomial $p_i(x)$ corresponds to the ideal $A_i$. Thus $\theta_i(x) \textrm{ mod }p(x)$  is equal to $1$ if $i=1$ and $0$ else. 
\end{proof}

This proposition proves that the Embedding function can be used to perform a multiplication in the ring instead of doing it in the field. The isomorphism also has this property, but the complexities of the transforms between the field and the ring are more complex.

\subsection{Sparse  transform}
\label{sec:sparse} 
Let us define the transform $\phi_S$ from $B_1$ to $R_{2,w+1}$:
\begin{equation*}
	\phi_S (b_B(x)) = b_A(x) = \phi_1(b_B(x)) + \delta . p(x)
\end{equation*}
where $\delta=1$ if $W(\phi_1(b_B(x)) + p(x)) < W(\phi_1(b_B(x))) $ and $0$ else. 
\begin{myprop}
	\label{prop:sparse}
	For any $u(x)$ and $v(x)$ in $B_1$, we have: 
	$$ \bar{\phi}_1^{-1}( \phi_S(u(x)).\phi_S(v(x)) ) = u(x).v(x) $$   
\end{myprop}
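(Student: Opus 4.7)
The plan is to show that $\phi_S$ and $\phi_1$ agree up to an element of the complementary ideal $A_2$, and that this discrepancy is annihilated when we apply $\bar{\phi}_1^{-1}$. First I would observe that, since $\theta_1(x) = p(x)+1$ is the primitive idempotent of $A_1$ and the idempotents sum to $1$ in $R_{2,w+1}$ (by the result cited from MacWilliams--Sloane, Ch.\ 8), the primitive idempotent of $A_2 = (p(x))$ is $\theta_2(x) = 1 + \theta_1(x) = p(x)$. In particular, the ``correction term'' $\delta \cdot p(x)$ appearing in the definition of $\phi_S$ is exactly a scalar multiple of $\theta_2(x)$, and therefore lies in $A_2$.

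Next I would write, for any $u(x) \in B_1$,
\begin{equation*}
\phi_S(u(x)) = \phi_1(u(x)) + \delta_u \, \theta_2(x),
\end{equation*}
where $\phi_1(u(x)) \in A_1$ and $\delta_u \theta_2(x) \in A_2$, and similarly for $v(x)$. Expanding the product in the ring and using the orthogonality of primitive idempotents ($\theta_1\theta_2 = 0$, $\theta_2^2 = \theta_2$), together with $\phi_1(u) = u\theta_1$ (so $\phi_1(u)\theta_2 = 0$), yields
\begin{equation*}
\phi_S(u)\,\phi_S(v) = \phi_1(u)\,\phi_1(v) + \delta_u \delta_v\, \theta_2(x).
\end{equation*}
This is the cleanest way to avoid grinding through the modular reductions of $p(x)^2$ directly.

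Finally I would apply $\bar{\phi}_1^{-1}$, which by definition sends $a(x)$ to $a(\alpha)$ for $\alpha$ a root of $p(x)$. Since $\theta_2(x) = p(x)$ vanishes at $\alpha$, the second term is killed. On the first term, $\phi_1(u)\phi_1(v)$ lies in $A_1$, so $\bar{\phi}_1^{-1}$ acts exactly as $\phi_1^{-1}$, and by the isomorphism property of $\phi_1$ we obtain $\phi_1^{-1}(\phi_1(u)\phi_1(v)) = u(x)v(x)$. Combining, $\bar{\phi}_1^{-1}(\phi_S(u)\phi_S(v)) = u(x)v(x)$, as claimed.

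The only genuine subtlety is identifying the correction $p(x)$ with the complementary idempotent $\theta_2(x)$; once this is in place, the rest is immediate from orthogonality of idempotents and the evaluation-at-$\alpha$ description of $\bar{\phi}_1^{-1}$. A reader who prefers a more elementary path could instead verify directly that $p(x)^2 \equiv p(x) \pmod{x^{w+1}+1}$ and that $\phi_1(u)\cdot p(x) = 0$ in $R_{2,w+1}$, but the idempotent viewpoint makes the computation nearly free.
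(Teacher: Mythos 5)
Your proof is correct, and it takes a somewhat different route from the paper's. The paper argues entirely by modular reduction: it notes that $\bar{\phi}_1^{-1}$ is just the remainder modulo $p(x)$ (equivalently, evaluation at $\alpha$), that by definition $\phi_S(u(x))$ differs from $u(x)$ only by a multiple of $p(x)$ (since $\phi_1(u(x))$ is $u(x)$ or $u(x)+p(x)$ by Proposition \ref{prop:isomorphismForm}, and $\phi_S$ adds at most another $p(x)$), hence the product $\phi_S(u(x))\phi_S(v(x))$ equals $u(x)v(x)$ plus a multiple of $p(x)$, which reduces to $u(x)v(x)$ modulo $p(x)$ --- three lines, no idempotent algebra, and no appeal to the isomorphism property of $\phi_1$. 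You instead identify the correction term with the complementary primitive idempotent $\theta_2(x)=1+\theta_1(x)=p(x)$ of $A_2$, expand the product using orthogonality ($\theta_1\theta_2=0$, $\theta_2^2=\theta_2$, $\phi_1(u)\theta_2=u\theta_1\theta_2=0$) to get $\phi_S(u)\phi_S(v)=\phi_1(u)\phi_1(v)+\delta_u\delta_v\,\theta_2(x)$, and then let evaluation at $\alpha$ kill $\theta_2$ while acting as $\phi_1^{-1}$ on $A_1$. Both arguments hinge on the same fact --- the discrepancy is a multiple of $p(x)$ and vanishes at $\alpha$ --- but yours is more structural: it makes explicit where each piece lives ($A_1$ versus $A_2$) and yields the exact ring-level formula $\phi_1(u)\phi_1(v)+\delta_u\delta_v p(x)$ for the product, at the cost of first establishing that $p(x)$ is idempotent in $R_{2,w+1}$, whereas the paper's reduction-mod-$p(x)$ argument gets the conclusion with less machinery (and is essentially the ``more elementary path'' you mention at the end). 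Your closing parenthetical also needs one nuance: $\phi_1(u)\cdot p(x)=0$ holds because $\phi_1(u)\in A_1=(x+1)$ and $(x+1)p(x)=x^{w+1}+1\equiv 0$, not as a separate unproved convenience; with that observation both of your suggested routes are complete.
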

\begin{proof}
	 As observed in the proof of Prop. \ref{prop:embedding}, $\bar{\phi}_i^{-1}$ is just the computation of the remainder modulo $p(x)$. Moreover, according to the definition of $\phi_S$, $\phi_S(u(x)).\phi_S(v(x))$ is equal to $u(x).v(x)$ plus a multiple of $p(x)$ (possibly equal to $0$). Thus, the remainder of $\phi_S(u(x)).\phi_S(v(x))$ modulo $p(x)$ is equal to $u(x).v(x)$.  
\end{proof}
This proposition shows that $\phi_S$ can be used to perform the multiplication in the ring. The main interest of this transform is that the weight of the image of $\phi_S$ is small, which reduce the complexity of the multiplication in the ring.
\subsection{Parity  transform}
\label{sec:parity} 

\begin{myprop}
	\label{cor:Parity}
	The ideal $A_1$ is composed of the set of elements of $R_{2,w+1}$ with even weight.
\end{myprop}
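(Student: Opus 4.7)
The plan is to identify the ideal $A_1$ with the set of multiples of $(x+1)$ and then use the standard divisibility-by-$(x+1)$ criterion over $\mathbb{F}_2$. By the discussion following Definition~\ref{def:AOP}, we have $A_1 = ((x^{w+1}+1)/p(x)) = (x+1)$, so every element of $A_1$ is of the form $(x+1)h(x) \bmod (x^{w+1}+1)$ for some $h(x) \in \mathbb{F}_2[x]$.

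First I would recall the elementary fact that a polynomial $f(x) \in \mathbb{F}_2[x]$ is divisible by $(x+1)$ if and only if $f(1)=0$, which over $\mathbb{F}_2$ is equivalent to $f$ having an even number of nonzero coefficients, i.e. $W(f)$ even. This gives the correct characterization of "even weight" in purely algebraic terms.

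For the forward inclusion, given $a(x) \in A_1$, I would lift it to $\mathbb{F}_2[x]$ and write $a(x) = (x+1)h(x) + k(x)(x^{w+1}+1)$ for some $k(x)$. Since $x^{w+1}+1 = (x+1)p(x)$, the factor $(x+1)$ divides both terms on the right, hence divides $a(x)$ in $\mathbb{F}_2[x]$. Evaluating at $1$ then shows $W(a)$ is even. For the reverse inclusion, if $a(x) \in R_{2,w+1}$ has even weight, then $a(1)=0$, so $(x+1)$ divides $a(x)$ in $\mathbb{F}_2[x]$, which means $a(x) \in (x+1) = A_1$ already inside the ring.

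The main obstacle is purely notational: one must be careful to distinguish divisibility in $\mathbb{F}_2[x]$ from the ideal membership in the quotient ring $R_{2,w+1}$, and to verify that reduction modulo $x^{w+1}+1$ does not affect the parity of the weight. This is ensured precisely because $(x+1) \mid (x^{w+1}+1)$, so subtracting any multiple of $x^{w+1}+1$ preserves the property $a(1)=0$, and hence preserves the parity of the weight of the canonical representative of degree $\le w$.
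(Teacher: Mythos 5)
Your proof is correct, but it takes a genuinely different route from the paper's. The paper argues via Proposition~\ref{prop:isomorphismForm}: every image of $\phi_1$ has even weight (adding $p(x)$, of odd weight $w+1$, to an odd-weight element yields an even-weight one), and then it concludes by a counting argument, since $|A_1|=2^w$ equals the number of even-weight elements of $R_{2,w+1}$, the inclusion of $A_1$ in the even-weight set must be an equality. You instead work directly from the identification $A_1=(x+1)$ and the evaluation-at-$1$ criterion: $(x+1)\mid f(x)$ over $\mathbb{F}_2$ iff $f(1)=0$ iff $W(f)$ is even, proving both inclusions explicitly and using $(x+1)\mid(x^{w+1}+1)$ to pass cleanly between the polynomial ring and the quotient. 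Your argument is more elementary and more general, it needs neither the explicit form of $\phi_1$ nor the AOP structure, only that $A_1$ is the ideal generated by $x+1$ in $\mathbb{F}_2[x]/(x^{n}+1)$ with $n$ odd, and it avoids the cardinality count; the paper's proof, by contrast, is shorter on the page because it reuses Proposition~\ref{prop:isomorphismForm}, which has already been established.
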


\begin{IEEEproof}
We can observe from Proposition \ref{prop:isomorphismForm} that all the image of $\phi_1$ have even weight. Since the number of even-weight element of $R_{2,w+1}$ is equal to the number of elements of $A_1$, $A_1$ is composed of the set of elements of $R_{2,w+1}$ with even weight.
\end{IEEEproof}

Let us consider the function $\phi_P$, from $B_1$ to $R_{2,w+1}$, which adds a single parity bit to the vector corresponding to the finite field element. The obtained element has an even weight (by construction), and thus, according to the previous Proposition, it belongs to $A_1$. 

Since the images by $\phi_P$ of two distinct elements are distinct, $\phi_P$ is a bijection between $B_1$ and $A_1$. The inverse function, $\phi_P^{-1}$, consists just in removing the last coefficient of the ring element.

It should be noted that $\phi_P$ is not an isomorphism, but just a bijection between $B_1$ and $A_1$. However it will be shown in next Section that this function can be used in the context of erasure codes.

\section{Application of transforms}
\label{sec:choice}

In typical \texttt{xor}-based erasure coding systems \cite{xor-luby}, the encoding process consists in multiplying an information vector by the generator matrix. Since in software, \texttt{xor} are performed using machine words of $l$ bits, $l$ interleaved codewords are encoded in parallel. 

We consider a system with $k$ input data blocks and $m$ output parity blocks.

The total number of \texttt{xor} of the encoding is thus defined by the generator matrix which must be as sparse as possible. First, we use a  $k\times (k+m)$-systematic generator matrix built from a $k\times k$-identity matrix concatenated to a $k\times m$ Generalized Cauchy (GC) matrix \cite{generatorMDS:Roth:85}. A GC matrix generates a systematic MDS code and it contains only $1$ on its first row and on its first column. Then, to improve the sparsity of the generator matrix in the ring, we use the Sparse transform $\phi_S$. This has to be done only once since the ring matrix is the same for all the codewords. 

For the information vectors, it is not efficient to use $\phi_S$ since the \texttt{xor}s of machine words do not take into account the sparsity of the \texttt{xor}-ed vectors. We thus use Embedding or Parity transforms, which are less complex than $\phi_1$.

When Embedding is used for information vectors and Sparse is used for the generator matrix, the obtained result in the ring can be sent into the field by using $\phi_1^{-1}$ (proof similar to the proofs of Propositions \ref{prop:embedding} or \ref{prop:sparse}). 

When Parity is used for the information vector, the image of the vector in the ring only contains elements of the ideal $A_1$. Since these elements are multiplied by the generator matrix (in the ring), the obtained result only contains elements of the ideal $A_1$. These elements have even weight, so it is not necessary to keep the parity bit before sending them on the "erasure channel". Since Parity transform is not an isomorphism, these data can not be decoded by another method. Indeed, to decode, it is necessary to apply $\phi_P$ (add the parity bit), then to decode by multiplying by the inverse matrix, and then to to apply $\phi_P^{-1}$ (remove the parity bit on the correct information vector).

\section{Complexity analysis}
\label{sec:complexity}
In this section, we determine the total number of \texttt{xor} operations done in the coding and the decoding processes. 

\subsection{Coding complexity}
\label{sec:codingComplexity}
The coding process is composed of three phases: the field to ring transform, the matrix vector multiplication and the ring to field transform. We assume that the information vector is a vector of $k$ elements of the field $\mathbb{F}_{2^w}$.

For the first and the third phases, Table \ref{tab:compAOP} gives the complexities of Embedding and Parity transforms obtained from their definition in Section \ref{sec:transforms}. 
\begin{table}[h]
	\centering 
	\begin{tabular}{c||c|c}
		 & field to ring & ring to field \\
		\hline
		\hline
		Embedding & 0 & $m.w$ \\
		\hline
		Parity bits & $k.w$ & 0 \\
	\end{tabular}	
	\caption{\label{tab:compAOP} Number of xor for Embedding and Parity transforms}
\end{table}	

The choice between the two methods thus depends on the values of the parameters: if $k>m$, Parity transform has lower complexity. Else, "Embedding" complexity is better.  

For the matrix vector operation, let us first consider the multiplication of two ring elements. As explained in the previous section, the first element (which corresponds to an information symbol) is managed by the software implementation by machine words. So the complexity of the multiplication only depends on the weight of the second element, denoted by $w_2\in\{0, 1, \ldots,w+1\}$. The complexity of this multiplication is thus $(w+1).w_2$.  

Now, we can consider the specificities of the various transforms. In the Parity transform, the last bit of the parity blocks is not used ( i.e. it is not transmitted on the erasure channel). So it is not necessary to compute it. It follows that the complexity of the multiplication is only $w.w_2$. 

Similarly, for the Embedding transform, the last bit of the input vector is always equal to $0$. So, we also have a complexity equal to $w.w_2$. 


To have an average number of operations done in the multiplication of the generator matrix by the input data blocks, we have to evaluate the average weight of the entries of the generator matrix in the ring. 

The generator matrix is a $k\times m$-GC matrix with the first column and the first row are filled by $1$.  The other elements can be considered as random non zero elements. They are generated by $\phi_S$ which chooses the lowest ring element among the two ones corresponding to the field element. Let us denote their average weight by $w_{\phi_S}$. 
For this case, the average number of \texttt{xor}s is thus:
$$ (k+m-1).w + (k-1).(m-1).w.w_{\phi_S} $$  

This leads to the following general expression of the coding complexity:
$$ (\min(k,m)+k+m-1).w+(k-1).(m-1).w.w_{\phi_S} $$ 




To estimate the complexity on a practical example, we fix the value of $w$ to $4$. Classic combinatorial evaluation (not presented here) gives the average weight for nonzeros images of $\phi_S$: $$w_{\phi_S}=\frac{w+1}{2^{w+1}-2}.\Big(2^w - \binom{w}{w/2}\Big)$$ So, $w_{\phi_S}=1.66$. We plot in Figure \ref{fig:factor} the evolution of the factor over optimal (used \textit{e.g.} in \cite{plankScheduling:2011}, table III) which is the density of the matrix normalized by the minimal density, $k.m.w$. We vary the value of $k$ for three values of $m$: $3$, $5$ and $7$. For each pair $(k,m)$, we generate 10000 random GC matrices and keep the best we found.

\begin{figure}[h]
	
	\begin{center}
		
		\begin{tikzpicture}		
		\begin{axis}[		
		ymajorgrids,		
		width=0.48\textwidth,		
		height=0.20\textheight,		
		xtick pos=left,		
		ytick pos=left,		
		legend columns=3,		
		legend entries={m=3,m=5,m=7, best m=3, best m=5, best m=7},		
		legend to name=named,		
		axis x line*=bottom,		
		axis y line*=left,		
		ymin=1,
		ymax=1.8,
		ytick={1,1.2,1.4, 1.6, 1.8},
		xlabel=$k$,
		ylabel=factor,
		title={\textbf{factor over optimal}}]
		\addplot+[red, mark=*,  mark size=1.5pt,mark options={fill=red}] table [ x=x, y=a,, col sep=space] {factor_m3.csv};			
		\addplot+[blue, mark=*,  mark size=1.5pt,mark options={fill=blue}] table [ x=x, y=a,, col sep=space] {factor_m5.csv};			
		\addplot+[green!40!black, mark=*,  mark size=1.5pt,mark options={fill=green!40!black}] table [ x=x, y=a,, col sep=space] {factor_m7.csv};
		\addplot+[solid, red!40!white, mark size=1.5pt, mark=triangle*, mark options={fill=red!40!white}] table [ x=x, y=a,, col sep=space] {factor_m3_min.csv};			
		\addplot+[solid, cyan, mark size=1.5pt, mark=triangle*, mark options={fill=cyan}] table [ x=x, y=a,, col sep=space] {factor_m5_min.csv};			
		\addplot+[solid, green, mark size=1.5pt, mark=triangle*, mark options={fill=green}] table [ x=x, y=a,, col sep=space] {factor_m7_min.csv};				
		\end{axis}		
		\end{tikzpicture}
		\ref{named}
		\caption{Factor over optimal depending on $m$}	
		\label{fig:factor}		
	\end{center}
\end{figure}
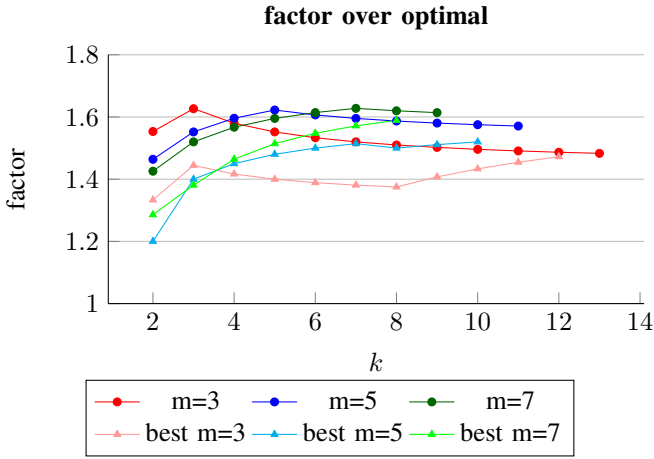

We can observe that the values are very low. For example, \cite{plankScheduling:2011} gives the lowest density of Cauchy matrices for the field $\mathbb{F}_{2^4}$ and we can observe that our values are always lower than these ones. 



To reduce the complexity in specific cases, we can observe that the ring contains $w$ elements whose the corresponding matrix is optimal (one diagonal). By using these elements, we can search by brute force MDS matrices built only with these optimal elements.
For example, let us consider the elements of the field $\mathbb{F}_{2^4}$ sent into the ring $R_{2,5}$. The Vandermonde matrix defined by:
$$ V = \big( x^{i.j}\big)_{i=0,\ldots,4; j=0,\ldots,4} $$  
where $x$ is a monomial in $R_{2,5}$ has the minimal number of $1$. It can be verified that this matrix can be used to build a systematic a MDS code. For this matrix, the total number of \texttt{xor}s done in the generation of the parity packets (including the field-to-ring and ring-to-field transforms) is 
$$ (\min(k,m)+k+m-1).w+(k-1).(m-1).w = k.w + k.m.w$$
Its factor over optimal is equal to $1.2$ which is lower than the values given in Figure \ref{fig:factor} and which is close to the lowest bound given in \cite{BlaumR99}. 



\subsection{Decoding complexity}
\label{sec:decodingComplexity}

As the decoding is a matrix inversion and a matrix vector multiplication, we can use the same approach to perform the multiplication. We first invert the sub-matrix in the field, then we transform each entry of this matrix into ring elements. Then, we perform the ring multiplication.

The complexity of the decoding thus depends on the complexity of the matrix inversion and on the complexity of the matrix vector multiplication. 

The complexity of the matrix vector multiplication was studied in the previous paragraph.

The complexity of the a $r\times r$-matrix inversion is generally in $O(r^3)$ operations in the field. But if the matrix has a Cauchy structure, this complexity can be reduced to $O(r^2)$ \cite{xor-luby}.

Note that, contrarily to the matrix vector multiplication, the matrix inversion complexity does not depend on the size of the source and parity blocks. And thus, it becomes negligible when the size of the blocks increase.

\section{Scheduling}
\label{sec:scheduling}

An interesting optimization on MDS erasure codes under \texttt{xor}-based representation is the scheduling of \texttt{xor} operations.

Such techniques are proposed in \cite{Hafner:2005}, \cite{Huang:2007}, \cite{plankScheduling:2011} and \cite{plankDSN2012}. The general principle consists in "factorizing" some \texttt{xor} operations which are done several times to generate the parity blocks. 

We show in the two next paragraphs that these techniques can be used very efficiently on the ring elements. 

However, it can be observed that the matrices defined over rings have two main advantages.

\subsection{Complexity reduction}
\label{sec:complexityScheduling}

Over finite fields, the scheduling consists in searching common patterns on the binary representation of the generator matrices. The $w \times w$-matrices representing the multiplication by the field elements does not have particular structure and thus, they must be entirely considered in the scheduling algorithm. 

This is not the case for the $(w+1)\times (w+1)$-matrices corresponding to a ring element because, thanks to the form of the polynomial $x^{w+1}+1$, they are composed of diagonals either full of $0$ or $1$. This means that they can be represented in the scheduling algorithm just by their first column or, equivalently, by the ring polynomial.

This allows to drastically reduce the algorithm complexity and thus to handle bigger matrices. From a polynomial point of view, the search of scheduling just consist in finding some common patterns in the equations generating the parity blocks.

\begin{example}
	Let us assume that $n=5$ and that three data polynomials $a_0(x),\ a_1(x)$ and $a_2(x)$ are combined to generate the three parities $p_0(x)=(1+x^4)a_0(x)+x^2a_1(x)+x^3a_2(x)$,  $p_1(x)=a_0(x)+x^3a_1(x)+(1+x^3)a_2(x)$ and $p_2(x)=a_0(x)+a_1(x)+x^3a_2(x)$. 

	In this case, the scheduling just consists in computing  $p'(x)= a_0(x)+x^3a_2(x) $ and then $p_0(x)=p'(x)+x^4a_0(x)+x^2a_1(x)$, $p_1(x)= p'(x)+x^3a_1(x)+a_2(x)$ and $p_2=p'(x)+a_2(x)$.	
	
	To estimate the complexity, we can consider the number of  sums of polynomials. Without scheduling, we need $11$ sums ($4$ for $p_0(x)$, $4$ for $p_1(x)$, and $3$ for $p_2(x)$) instead of with scheduling, we only need 10 sums ($2$ for $p'(x)$, $3$ for $p_0(x)$, $3$ for $p_1(x)$ and $2$ for $p_2(x)$).   
\end{example}

\subsection{Additional patterns}
\label{sec:additionalPatterns}
Ring-based matrices allow to find more common patterns than field-based matrices. The main idea is to observe that, in the ring, we can "factorize" not only common operations, but also operations which are multiple by a monomial (\emph{i. e.} cyclic-shift) of operations done in some other equations. This is possible only because the multiplications are done modulo $x^{w+1}+1$.

\begin{example}
	Let us assume that $n=5$ and that three data polynomials $a_0(x),\ a_1(x)$ and $a_2(x)$ are combined to generate the parities 
	$p_0(x)=a_0(x)   + x^2a_1(x)+(1+x^2)a_2(x)$,  
	$p_1(x)=x^2a_0(x)+ x^3a_1(x)+(x+x^4)a_2(x)$ and 
	$p_2(x)=x^2a_0(x)+ a_1(x)+(x^2+x^3)a_2(x)$. 
	We can observe that, with a "simple" scheduling, it is not possible to factorize some operations. 
	
	However, by rewriting the polynomials, we can reveal factorizations: $p_1(x)=x^2a_0(x)+x(x^2a_1(x)+a_2(x))+x^4a_2(x)$ and $p_2(x)=x^2a_0(x)+x^3(x^2a_1(x)+a_2(x))+x^2a_2(x)$. So, if $p'(x)=x^2a_1(x)+a_2(x)$, we have $p_0(x)=p'(x)+a_0(x)+x^2a_2(x)$, $p_1(x)=xp'(x)+x^2a_0(x)+x^2a_2(x)$ and $p_2(x)=x^3p'(x)+x^2a_0(x)+x^2a_2(x)$. 
	
	To estimate the complexity by the same method than in the previous example, we need $11$ polynomial additions with scheduling compared to $12$ additions necessary without scheduling. 		 		
\end{example}

\subsection{Scheduling results}
\label{sec:schedulingResults}
To evaluate the potential gain of the scheduling, we have implemented an exhaustive search of the best patterns on generator matrices. 

This algorithm was applied on several codes for the field $\mathbb{F}_{2^4}$. Table \ref{tab:factorScheduling} presents the results in term of "factor over optimal" which is defined as the total number of 1 in the matrix over the number of 1 for the optimal MDS matrix , i.e $k.m.w$.

When working in a ring, we include to the complexity the operations needed to apply the transforms. In this case, Embedding transform has a lower complexity. So we added $m.w$ to the number of 1 in the matrix resulting from the scheduling algorithm.

For each case, we have generated $100$ random Generalized Cauchy matrices. 

The measured parameters are:
\begin{itemize}
	\item average field matrix: average number of $1$ in the GC matrices divided by $k.m.w$
	\item best field matrix: lowest number of $1$ among the GC matrices divided by $k.m.w$
	\item average ring matrix:  average number of $1$ in the ring matrices (without scheduling) + ring-field correspondence  divided by $k.m.w$
	\item best ring matrix:  best number of $1$ among the ring matrices (without scheduling) + ring-field correspondence  divided by $k.m.w$
	\item average with scheduling: average number of \texttt{xor}s with scheduling + ring-field correspondence  divided by $k.m.w$
	\item best with scheduling: best number of \texttt{xor}s with scheduling + ring-field correspondence  divided by $k.m.w$	
\end{itemize} 

\begin{table}[h]
	\centering 
	\begin{tabular}{c||c||c}
		k+m,k	 & \multicolumn{1}{c||}{12,8} & \multicolumn{1}{c}{ 16,10} \\
		\hline \hline
		average field matrix & 1.79  & 1.90   \\
		\hline
		best field matrix & 1.73   & 1.85  \\
		\hline
		average ring matrix & 1.59  & 1.63 \\
		\hline
		best ring matrix & 1.5  & 1.58 \\
		\hline
		average with scheduling  & 1.32 &  1.26  \\
		\hline
		best with scheduling & 1.19 &  1.20  \\
	\end{tabular}
	\caption{\label{tab:factorScheduling} factor over optimal for $w=4$}
\end{table}	

This table confirms that, even without scheduling, ring matrices have a lower density than field matrices, thanks to the Sparse transform.
Applying scheduling to these matrices allows a significant gain of complexity. Indeed, it reduces the complexity by more than $20\%$  on the best matrices. The final results are similar to the results obtained (without scheduling) on the optimal matrix in Section \ref{sec:codingComplexity}. To the best of our knowledge, other scheduling approaches do not reach this level of sparsity for these parameters.

\section{Conclusion}
In this paper, we have presented a new method to build MDS erasure codes with low complexity. By using transforms between a finite field and a polynomial ring, sparse generator matrices can be obtained. This allows to significantly reduce the complexity of the matrix vector multiplication. 
It also allows simple schedulers that drastically improve the complexity by reducing the number of operations. 
\\Similar results can be obtained with Equally-Spaced Polynomials (ESP) \cite{ITOH1989}, but they are not presented here due to lack of space.

\bibliographystyle{IEEEtran}
\bibliographystyle{plain}
	\bibliography{biblio}

\end{document}